\let\csname equation*\endcsname\relax
\let\csname endequation*\endcsname\relax
\theoremstyle{thmstyleone}%
\newtheorem{theorem}{Theorem}
\newtheorem{corollary}{Corollary}
\begin{document}

\title[An Accurate and Efficient Analytic Model of Fidelity]{An Accurate and Efficient Analytic Model of Fidelity Under Depolarizing Noise Oriented to Large Scale Quantum System Design}

\author{Pau Escofet\textsuperscript{1}, Santiago Rodrigo\textsuperscript{1}, Artur Garcia-Sáez\textsuperscript{2,3} Eduard Alarcón\textsuperscript{1}, Sergi Abadal\textsuperscript{1}, Carmen G. Almudéver\textsuperscript{4}}

\address{
$^1$ Universitat Politècnica de Catalunya, Barcelona, Spain\\
$^2$ Bercelona Supercomputing Center, Barcelona, Spain\\
$^3$ Qilimanjaro Quantum Tech, Barcelona, Spain\\
$^4$ Universitat Politècnica de València, Valencia, Spain
}
\ead{pau.escofet@upc.edu}
\vspace{10pt}

\begin{abstract}
Fidelity is one of the most valuable and commonly used metrics for assessing the performance of quantum circuits on error-prone quantum processors. Several approaches have been proposed to estimate circuit fidelity without executing it on quantum hardware, but they often face limitations in scalability or accuracy. In this work, we present a comprehensive theoretical framework to predict the fidelity of quantum circuits under depolarizing noise. Building on theoretical results, we propose an efficient fidelity estimation algorithm based on device calibration data. The method is thoroughly validated through simulation and execution on real hardware, demonstrating improved accuracy compared to state-of-the-art alternatives, with enhancements in prediction $R^2$ ranging from 4.96\% to 213.54\%.. The proposed approach provides a scalable and practical tool for benchmarking quantum hardware, comparing quantum software techniques such as compilation methods, obtaining computation bounds for quantum systems, and guiding hardware design decisions, making it a critical resource for developing and evaluating quantum computing technologies.
\end{abstract}

\section{Introduction}
\label{sec:introduction}

Quantum errors are one of the most significant challenges in current quantum computers \cite{preskill_quantum_2018, Proctor2022}, limiting the reliable execution of quantum algorithms. Quantum systems suffer from various types of errors, including thermal errors \cite{khatri2020information, Chapeau_Blondeau_2022}, crosstalk errors \cite{ding_systematic_2020, Sarovar2020detectingcrosstalk}, readout errors \cite{Nachman2020}, and operational errors, which occur when executing quantum gates. Quantum error correction \cite{PhysRevA.52.R2493, error_correction, Litinski2019gameofsurfacecodes, 10.1145/3123939.3123949, Xu2024, PRXQuantum.2.040101} and quantum error mitigation \cite{RevModPhys.95.045005, Takagi2022, PRXQuantum.3.010345} techniques have been proposed to suppress or mitigate those errors, increasing the reliability of quantum computations and enabling practical use of quantum devices for tasks such as optimization \cite{Moll_2018}, cryptography \cite{shor_polynomial_1997}, and simulation \cite{feynman_1982_simulating}.

Assessing the performance of quantum processors is a fundamental step in advancing quantum computing \cite{Proctor2022}. Reliable performance evaluation enables the design of future quantum architectures, facilitates the comparison of quantum software tools such as compilation strategies, and supports informed decision-making at all stages of quantum system development. As quantum hardware scales, understanding the impact of noise and errors on computations becomes crucial for optimizing both hardware and software components for practical applications \cite{preskill_quantum_2018}.  

Several methods have been proposed to estimate the reliability of quantum computations \cite{4127220, 10.1145/3297858.3304007, 10361567, 9251243, Vadali2023, 10.1145/3508352.3561118}, offering insights into the expected performance of quantum algorithms. However, existing approaches typically suffer from one of three significant limitations. Some methods lack \emph{scalability}, restricting their applicability to small-scale quantum processors and making them impractical for analyzing larger, more complex systems \cite{4127220}. Others lack a strong \emph{theoretical foundation}, resulting in predictions that fail to reflect the behaviour of the quantum states accurately \cite{10.1145/3297858.3304007, 10361567}. Additionally, many approaches struggle to \emph{generalize across different hardware platforms}, requiring adaptation or retraining when used beyond their originally targeted systems \cite{9251243, Vadali2023, 10.1145/3508352.3561118}. These limitations hinder the ability to make precise performance assessments, which are essential for benchmarking hardware, refining error mitigation techniques, and guiding quantum system design.  

This work proposes a theoretical model that analytically characterizes how depolarizing noise affects quantum states when executing a circuit on a quantum processor. This model serves as the foundation for a fidelity estimation algorithm that predicts the final quantum state fidelity for a given quantum circuit and calibration data of the quantum processor.

The rest of the paper is structured as follows: Section \ref{sec:background} presents the necessary background for this work and reviews state-of-the-art fidelity estimation techniques. Section \ref{sec:theorems} studies the effects of depolarizing noise on quantum states and develops the theoretical framework underlying our fidelity estimation algorithm. Section \ref{sec:methods} details the experimental methodology, and Section \ref{sec:results} presents and discusses the results obtained. Section \ref{sec:discussion} explores potential use cases for the proposed model and its broader implications. Finally, Section \ref{sec:conclusions} summarizes the key findings and conclusions of this study.

\section{Background and Related Work}
\label{sec:background}

Quantum fidelity \cite{nielsen_chuang_2010, doi:10.1080/09500349414552171} is a measure of the similarity between two quantum states and serves as an indicator of either the reliability of a computation or the presence of excessive errors. It is calculated using the density matrix representation of the quantum states ($\rho$ and $\sigma$), as defined in Equation (\ref{eq:fidelity}).

\begin{equation}
    F(\rho, \sigma) := ||\sqrt{\rho} \sqrt{\sigma}||_1^2 = \left( tr \sqrt{\sqrt{\rho} \sigma \sqrt{\rho}} \right)^2
    \label{eq:fidelity}
\end{equation}

If at least one of the two states ($\rho$ or $\sigma$) is pure, the fidelity can be expressed as the trace of the product of the pure state's density matrix and the other state, whether pure or mixed (\textit{i.e.}, $F(\rho, \sigma) = tr(\sigma \rho)$). Fidelity exhibits properties such as \textit{unitary invariance} and \textit{multiplicativity under tensor product} (Equations (\ref{eq:unitary_invariance_fidelity}) and (\ref{eq:fidelity_tensor})) \cite{doi:10.1080/09500349414552171}, both of which are essential for the analysis presented in this work.

\begin{equation}
    F(\rho, \sigma) = F(U \rho U^\dagger, U \sigma U^\dagger)
    \label{eq:unitary_invariance_fidelity}
\end{equation}

\begin{equation}
    F(\rho_1 \otimes \rho_2, \sigma_1 \otimes \sigma_2) = F(\rho_1, \sigma_1) \cdot F(\rho_2, \sigma_2)
\label{eq:fidelity_tensor}
\end{equation}

Estimating the fidelity of quantum circuits without executing them on quantum hardware is crucial to advancing the design of quantum processors. A reliable fidelity estimation method enables researchers to compare quantum systems, optimize compilation strategies, and predict performance across different hardware platforms. As quantum circuits grow in complexity with increasing qubit numbers, the estimation method must stay computationally efficient to ensure scalability.

Various techniques have been proposed to estimate the quantum fidelity of a given quantum circuit when executed on a specific quantum processor. These approaches include simulation-based methods, analytical models, and machine-learning strategies.

When simulating quantum circuits \cite{10.1145/3310273.3323053, 10.5555/3135595.3135617, 9605307, 10044854} noise can be added to the quantum state, a process known as statistical fault injection \cite{4127220}. The fidelity of the computation can be assessed by comparing a simulated noisy state with its noiseless counterpart. However, the simulation of quantum states is computationally expensive, with complexity growing exponentially with the number of qubits, restricting its application to small-scale circuits \cite{Boixo2018}. The accuracy of the estimation will depend on the noise model used in the simulation, as well as on the precision of the simulation itself, which may introduce errors arising from approximations (\textit{e.g.}, limiting the dimensionality of tensors in a tensor network simulation \cite{biamonte2017tensor}).

Analytical approaches for fidelity estimation \cite{10.1145/3297858.3304007, 10361567} typically rely on gate error rates obtained from device calibration, usually through randomized benchmarking \cite{PhysRevA.77.012307, PhysRevA.85.042311}, without assuming any error model in particular. An example of such methods is the Probability of Successful Trial (PST) \cite{10.1145/3297858.3304007}, also known as Estimated Success Probability (ESP), a commonly used metric for estimating the fidelity of a quantum computation by multiplying the individual gate fidelities ($F_{g}$) and measurement fidelities ($F_{m}$) of the given circuit (Equation (\ref{eq:esp})). However, ESP does not account for how errors propagate through the system or specify which qubits are affected by the errors, obtaining a circuit fidelity that does not depend on the circuit's structure but just on the number of gates and their respective fidelities. 

\begin{equation}
    ESP = \prod_{i=1}^{N_{gates}} F_{g_i} \cdot \prod_{i=1}^{N_{meas}} F_{m_i}
    \label{eq:esp}
\end{equation}

The Quantum Vulnerability Analysis (QVA) method \cite{10361567}, like ESP, uses individual gate fidelities as its basis. However, it introduces a hyperparameter ($w \in [0,1]$) that quantifies the degree of cross-error introduced by a two-qubit gate, requiring fine-tuning for each specific system. Additionally, unlike ESP, QVA accounts for the fidelity of each two-qubit gate twice (once for each qubit involved), resulting in consistently lower fidelity estimates compared to ESP. 

Machine-learning-based techniques \cite{9251243, Vadali2023, 10.1145/3508352.3561118} have also been proposed to estimate circuit fidelity. These approaches often target specific platforms and require extensive training data and retraining for different devices. This dependence on platform-specific training limits their adaptability and scalability, making them less suitable for general-purpose fidelity estimation.

To better understand the strengths and limitations of existing fidelity estimation methods, we compare their predictions against experimental results from an actual quantum processor. Figure \ref{fig:sota_comparison} presents the fidelity predictions of statistical fault injection (using Qiskit's simulator \cite{javadiabhari2024quantumcomputingqiskit}), ESP, and QVA for quantum circuits ranging from 2 to 8 qubits, alongside the success rate of the same circuits executed on IBM Q Kyiv (Eagle r3 processor \cite{chow_2021_ibm}). The process from obtaining the circuits to computing the success rate after readout mitigation is explained in Section \ref{sec:execution}.

Figure \ref{fig:sota_comparison} shows how both Qiskit's simulation and ESP inaccurately predict the fidelity observed in real quantum processors, consistently overestimating it. Conversely, QVA's performance highly depends on its hyperparameter $w$, with fidelity predictions varying significantly, up to 0.8, between the extreme values of $w = 0$ and $w = 1$ (\emph{QVA max} and \emph{QVA min} in Figure \ref{fig:sota_comparison}). This high variability introduces the additional challenge of determining the optimal $w$, which the original work \cite{10361567} addressed by training a machine learning model for this purpose. In our analysis, we use $w = 0.5$ (depicted as red points in Figure \ref{fig:sota_comparison}) as it provides a balanced prediction (see box plot in Figure \ref{fig:sota_comparison}) while also reporting results for $w = 0$ and $w = 1$. In the original work \cite{10361567}, the authors identify $w=0$ as a near-optimal value for large circuits. Therefore, we report results for both $w=0.5$ and $w=0$ when assessing fidelity estimation performance to ensure a fair comparison.

\begin{figure}
    \centering
    \includegraphics[width=1\linewidth]{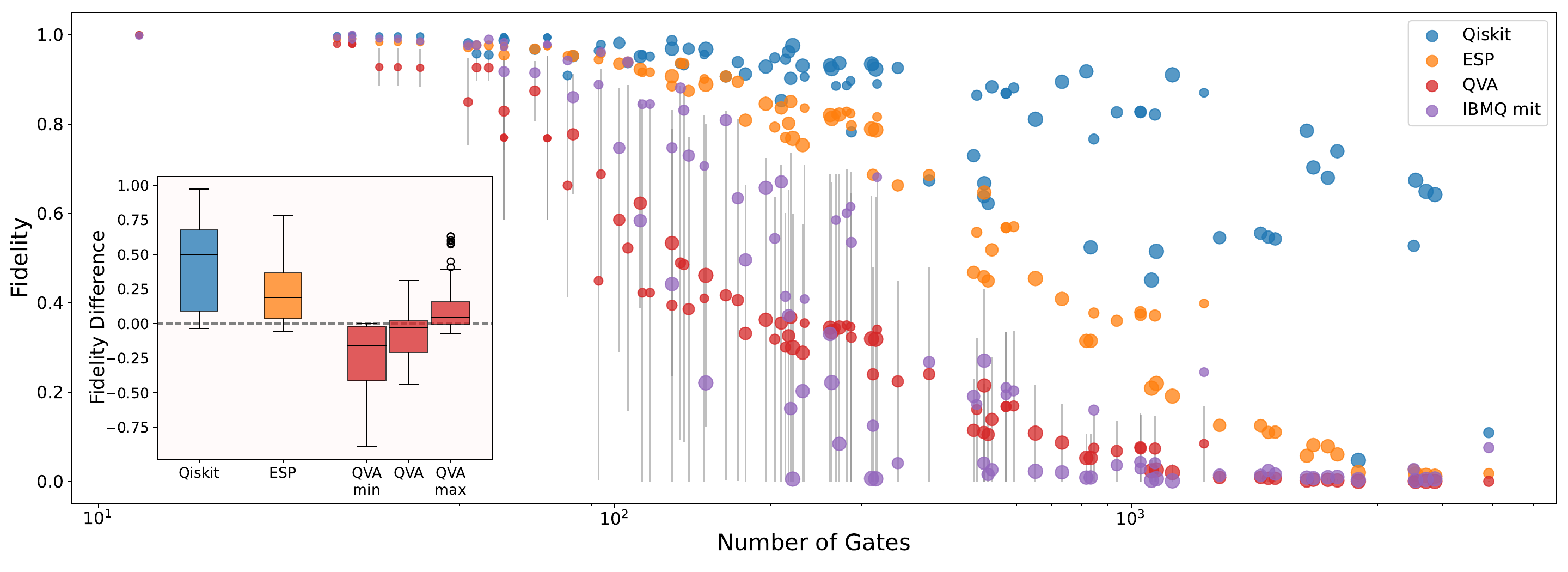}
    \caption{Predicted fidelity using various state-of-the-art prediction methods alongside the measured success rate from real hardware. For the QVA method, vertical lines indicate the range of solutions corresponding to different values of the $w$ hyperparameter. The size of the points is proportional to the number of qubits in the circuit (between 2 and 8). Boxplots display the distance between the predicted fidelity ($fid_{pred}$) and the measured success rate ($sr_{true}$), quantified as $fid_{pred} - sr_{true}$, providing insight into the accuracy and variability of each model's predictions.}
    \label{fig:sota_comparison}
\end{figure}

These observations highlight the need for a more accurate fidelity estimation model that, like QVA and ESP, is scalable in terms of computational cost, allowing the number of qubits and gates in the circuit to increase. However, minimizing the dependence on hyperparameters that can excessively influence predictions is essential, as this variability can compromise the reliability and consistency of the model's outcomes.

In this work, we develop a theoretical model for estimating the fidelity under depolarizing noise (Equation (\ref{eq:depolarizing_channel})), under the assumption that each quantum gate introduces depolarization to the involved qubits, a widely accepted perspective on operational errors \cite{PhysRevLett.119.180509, 10.1145/3674151, guinn2023codesignedsuperconductingarchitecturelattice, PRXQuantum.5.030326, PRXQuantum.4.020345, Proctor2022}. By examining how depolarizing noise impacts circuit fidelity, we lay the groundwork for a scalable fidelity estimation algorithm rooted in theoretical principles.

\begin{equation}
    \rho \rightarrow \mathcal{E}(\rho) = (1-p) \rho + \frac{p}{3} (X \rho X + Y \rho Y + Z \rho Z) = (1-p) \rho + \frac{p}{d} I_d
    \label{eq:depolarizing_channel}
\end{equation}

We primarily use the last expression in Equation (\ref{eq:depolarizing_channel}) to represent the depolarizing channel, where $d$ is the dimensionality of the quantum system being depolarized ($2^n$ for an \textit{n}-qubit system), $I_d/d$ is the maximally mixed state with the same dimensions ($d$) as $\rho$ (the quantum state described by its density matrix), and $p$ is the depolarization factor (or depolarization probability). This formulation provides a compact and general framework to model depolarization effects across systems of varying dimensions.

In the following two sections, we present our proposed fidelity estimation methodology. Section \ref{sec:theorems} develops the theoretical foundation of our approach, deriving a closed-form analytical model that characterizes how depolarizing noise affects quantum states throughout a circuit. Building on these theoretical insights, Section \ref{sec:methods} introduces a practical fidelity estimation algorithm that leverages the model to provide accurate predictions using only the quantum circuit structure and calibration data from the target device.

\section{Theoretical Modeling of the Fidelity Loss}
\label{sec:theorems}

In this section, we develop and prove the theorems that establish the foundation for a theoretical analysis of how the depolarizing noise channel influences the fidelity of quantum states.

To understand how depolarizing noise affects quantum circuits, we begin by exploring the evolution of a quantum state under the influence of repeated depolarizing channels. Theorem \ref{theo:n_depol} provides a formal expression for how a quantum state, described by its density matrix $\rho$, evolves after $n$ depolarizing channels have been applied, each introducing a depolarizing factor $p$.

\begin{theorem}[Consecutive depolarizing channels]
\label{theo:n_depol}
     After applying $n$ consecutive depolarizing channels with depolarizing factor $p$ to an initial pure quantum state $\rho$, the resulting quantum state $\mathcal{E}^n(\rho)$ can be expressed as:
    \begin{equation}
        \mathcal{E}^n (\rho) = (1-p)^n \rho + \frac{1-(1-p)^n}{d} I_d
    \end{equation}
\end{theorem}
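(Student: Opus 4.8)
The plan is to prove the formula by induction on $n$. The base case $n = 0$ is immediate: $\mathcal{E}^0(\rho) = \rho = (1-p)^0 \rho + \frac{1 - (1-p)^0}{d} I_d$, since the second term vanishes. For the inductive step, I would assume the claimed expression holds for $n$ and apply one more depolarizing channel, using the linearity of $\mathcal{E}$ together with the last form of Equation (\ref{eq:depolarizing_channel}).

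The key computation is as follows. Writing $q = (1-p)^n$ for brevity, the inductive hypothesis gives $\mathcal{E}^n(\rho) = q\rho + \frac{1-q}{d} I_d$. Applying $\mathcal{E}$ and using linearity,
\begin{equation}
    \mathcal{E}^{n+1}(\rho) = q\,\mathcal{E}(\rho) + \frac{1-q}{d}\,\mathcal{E}(I_d).
\end{equation}
For the first term, $\mathcal{E}(\rho) = (1-p)\rho + \frac{p}{d} I_d$ directly from Equation (\ref{eq:depolarizing_channel}). For the second term, I would observe that the maximally mixed state $I_d/d$ is a fixed point of the depolarizing channel: substituting $\rho = I_d/d$ into Equation (\ref{eq:depolarizing_channel}) gives $(1-p)\frac{I_d}{d} + \frac{p}{d} I_d \cdot \frac{1}{d} \cdot d$... more carefully, $\mathcal{E}(I_d) = (1-p) I_d + \frac{p}{d} I_d \operatorname{tr}\!\big(\tfrac{I_d}{d}\big)\cdot d$; the cleanest route is to note $\mathcal{E}(I_d/d) = I_d/d$ because the second expression in Equation (\ref{eq:depolarizing_channel}) sends any state to a convex combination with $I_d/d$, and a convex combination of $I_d/d$ with itself is again $I_d/d$. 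Hence $\mathcal{E}(I_d) = I_d$.

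Combining, $\mathcal{E}^{n+1}(\rho) = q\big((1-p)\rho + \tfrac{p}{d} I_d\big) + \tfrac{1-q}{d} I_d = q(1-p)\rho + \big(\tfrac{qp}{d} + \tfrac{1-q}{d}\big) I_d = (1-p)^{n+1}\rho + \tfrac{1 - (1-p)^{n+1}}{d} I_d$, where the coefficient of $I_d$ simplifies because $qp + 1 - q = 1 - q(1-p) = 1 - (1-p)^{n+1}$. This closes the induction.

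I do not anticipate a genuine obstacle here — the result is essentially a one-line geometric-series argument. The only point requiring a moment's care is justifying that $\mathcal{E}$ acts linearly on density matrices (so that it distributes over the convex combination in the inductive hypothesis) and that $I_d/d$ is a fixed point; both follow immediately from the explicit form of the channel in Equation (\ref{eq:depolarizing_channel}). It is worth noting that the hypothesis that $\rho$ is pure is not actually needed for this particular statement — the formula holds for any density matrix — but I would keep the assumption as stated since later results in this section presumably rely on purity.
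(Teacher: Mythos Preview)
Your proof is correct and follows essentially the same inductive argument as the paper; the only cosmetic differences are that the paper starts the base case at $n=1$ rather than $n=0$, and in the inductive step it substitutes $\mathcal{E}^k(\rho)$ directly into the channel formula $\mathcal{E}(\sigma) = (1-p)\sigma + \tfrac{p}{d}I_d$ rather than first splitting via linearity and invoking the fixed-point property of $I_d/d$. Your remark that purity of $\rho$ is unnecessary for this particular theorem is also correct.
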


\begin{proof}
    By induction.
    
    Let $P(n)$ be the statement that $\mathcal{E}^n(\rho) = (1-p)^n \rho + \frac{1-(1-p)^n}{d} I_d$.

    \textit{Base case:} For $n=1$ we have:
    \begin{equation}
        \mathcal{E}^1 (\rho)= (1-p)^1 \rho + \frac{1-(1-p)^1}{d} I_d = (1-p) \rho + \frac{p}{d} I_d
    \end{equation}
    
    This is true by the definition of depolarizing channel given in Equation(\ref{eq:depolarizing_channel}).

    \textit{Inductive hypothesis:} Assume $P(k)$ is correct for some positive integer $k$, which means $\mathcal{E}^k (\rho) = (1-p)^k \rho + \frac{1-(1-p)^k}{d} I_d$.

    \textit{Inductive step:} We now show that $P(k+1)$ is correct.

    \begin{align}
        \mathcal{E}^{k+1} (\rho) 
            &= \mathcal{E} \left( \mathcal{E}^k (\rho) \right) \\
            &= (1-p) \left( (1-p)^k \rho + \frac{1-(1-p)^k}{d} I_d \right) + \frac{p}{d} I_d \\
            &= (1-p)^{k+1} \rho + \frac{(1-p) - (1-p)^{k+1}}{d} I_d + \frac{p}{d} I_d \\
            &= (1-p)^{k+1} \rho + \frac{1- (1-p)^{k+1}}{d} I_d
    \end{align}
\end{proof}

Following this, Theorem \ref{theo:fidelity} quantifies the impact of these repeated depolarizations on the fidelity of the quantum state.

\begin{theorem}[Fidelity after $n$ depolarizations]
\label{theo:fidelity}
    The fidelity between a pure quantum state, $\rho$, and the same quantum state after $n$ depolarizing channels, $\mathcal{E}^n(\rho)$, is given by:
    \begin{equation}
        F(\rho, \mathcal{E}^n(\rho)) = (1 - p)^n + \frac{1-(1-p)^{n}}{d}
    \end{equation}
\end{theorem}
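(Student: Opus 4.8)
The plan is to combine Theorem~\ref{theo:n_depol} with the elementary properties of fidelity recalled in Section~\ref{sec:background}, namely that when one argument is a pure state $\rho$ one has $F(\rho,\sigma)=\operatorname{tr}(\rho\sigma)$, and that fidelity is linear enough in $\sigma$ for this particular form to be exploited directly. First I would substitute the closed form $\mathcal{E}^n(\rho) = (1-p)^n \rho + \frac{1-(1-p)^n}{d} I_d$ from Theorem~\ref{theo:n_depol} into $F(\rho,\mathcal{E}^n(\rho))$, using the pure-state simplification to write $F(\rho,\mathcal{E}^n(\rho)) = \operatorname{tr}\!\big(\rho\,\mathcal{E}^n(\rho)\big)$.

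Next I would expand the trace by linearity into two terms: $(1-p)^n \operatorname{tr}(\rho^2)$ and $\frac{1-(1-p)^n}{d}\operatorname{tr}(\rho I_d)$. The first term uses that $\rho$ is pure, so $\rho^2 = \rho$ and $\operatorname{tr}(\rho^2) = \operatorname{tr}(\rho) = 1$. The second term uses $\operatorname{tr}(\rho I_d) = \operatorname{tr}(\rho) = 1$. Putting the pieces together yields
\begin{equation}
    F(\rho,\mathcal{E}^n(\rho)) = (1-p)^n + \frac{1-(1-p)^n}{d},
\end{equation}
which is exactly the claimed expression.

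The calculation itself is routine once the right tools are lined up, so the main thing to be careful about is the justification that $F(\rho,\sigma)=\operatorname{tr}(\rho\sigma)$ can be applied and then split additively over the convex combination defining $\mathcal{E}^n(\rho)$ — in other words, that we may pull the scalars $(1-p)^n$ and $\frac{1-(1-p)^n}{d}$ out of $F(\rho,\cdot)$ and out of the trace. Since $\rho$ is pure this reduces to linearity of the trace, which is unconditional, so the only real obstacle is making sure the pure-state hypothesis on $\rho$ is invoked explicitly (it is needed both for the $F=\operatorname{tr}$ identity and for $\operatorname{tr}(\rho^2)=1$). An alternative route that avoids the trace identity is to apply the multiplicativity and unitary-invariance properties (Equations~(\ref{eq:fidelity_tensor}) and~(\ref{eq:unitary_invariance_fidelity})) after diagonalizing, but the direct trace computation is shorter and I would prefer it.
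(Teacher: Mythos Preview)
Your proposal is correct and follows essentially the same route as the paper: invoke the pure-state identity $F(\rho,\sigma)=\operatorname{tr}(\rho\sigma)$, substitute the closed form from Theorem~\ref{theo:n_depol}, expand by linearity of the trace, and use $\operatorname{tr}(\rho^2)=\operatorname{tr}(\rho)=1$. Your added remarks about where the purity hypothesis is actually used are a nice touch that the paper's proof leaves implicit.
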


\begin{proof}
    \begin{align}
        F\left(\rho, \mathcal{E}^n (\rho) \right) 
            &= tr \left(\rho \cdot \mathcal{E}^n (\rho) \right) \\
            &= tr \left(\rho \cdot \left( (1-p)^n \rho + \frac{1-(1-p)^n}{d} I_d \right) \right) \\
            &= tr\left( (1-p)^n \rho^2 \right) + tr \left( \frac{1-(1-p)^n}{d} \rho \right) \\
            &= (1-p)^n tr\left( \rho^2 \right) + \frac{1-(1-p)^n}{d} tr \left( \rho \right) \\
            &= (1-p)^n + \frac{1-(1-p)^n}{d}
    \end{align}
\end{proof}

To further explore the impact of depolarization, Corollary \ref{theo:fidelity_loss} analyses the fidelity loss after a single depolarization, considering the effects of repeated depolarizations over time. This finding serves as a basis for quantifying the contribution of individual gate errors when estimating the fidelity of quantum circuits. 

\begin{corollary}[Fidelity loss]
\label{theo:fidelity_loss}
    The change in fidelity from a single depolarizing channel $\mathcal{E}^{n+1} (\rho) = \mathcal{E} ( \mathcal{E}^n (\rho))$ applied to an already depolarized state $\mathcal{E}^n (\rho)$ (previously pure state $\rho$) is:
    \begin{equation}
        F\left(\rho, \mathcal{E}^{n+1} (\rho)\right) = (1-p) F\left(\rho, \mathcal{E}^{n} (\rho)\right) + \frac{p}{d}
    \end{equation}
\end{corollary}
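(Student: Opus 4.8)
The plan is to derive the stated recurrence directly from the closed-form expressions already established, rather than re-running the induction. Since Theorem~\ref{theo:fidelity} gives $F(\rho, \mathcal{E}^n(\rho)) = (1-p)^n + \frac{1-(1-p)^n}{d}$ for every $n \geq 0$, the strategy is to write down this formula at both index $n$ and index $n+1$, and then check that the two are related by the affine map $x \mapsto (1-p)x + \frac{p}{d}$.

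Concretely, I would first abbreviate $F_n := F(\rho, \mathcal{E}^n(\rho))$ and substitute the Theorem~\ref{theo:fidelity} expression into the right-hand side of the claimed identity:
\begin{align}
(1-p) F_n + \frac{p}{d}
    &= (1-p)\left( (1-p)^n + \frac{1-(1-p)^n}{d} \right) + \frac{p}{d} \\
    &= (1-p)^{n+1} + \frac{(1-p) - (1-p)^{n+1}}{d} + \frac{p}{d} \\
    &= (1-p)^{n+1} + \frac{1 - (1-p)^{n+1}}{d},
\end{align}
which is exactly $F_{n+1}$ by Theorem~\ref{theo:fidelity} evaluated at $n+1$. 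This mirrors the algebra already used in the inductive step of Theorem~\ref{theo:n_depol}, so no new manipulation is needed.

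Alternatively, and perhaps more in the spirit of the corollary's phrasing (``the change in fidelity from a single depolarizing channel''), one could argue at the level of density matrices: apply Theorem~\ref{theo:fidelity} (or its proof) to see $\mathcal{E}^{n+1}(\rho) = (1-p)\,\mathcal{E}^n(\rho) + \frac{p}{d} I_d$ directly from the definition of $\mathcal{E}$, then take $F(\rho, \cdot) = \mathrm{tr}(\rho \,\cdot\,)$ of both sides and use linearity of the trace together with $\mathrm{tr}(\rho) = 1$. Either route is essentially a one-line computation.

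I do not anticipate a genuine obstacle here; the only thing to be careful about is bookkeeping of the $(1-p)$ factors and the $\frac{1}{d}$ terms when combining fractions — exactly the same cancellation $\frac{(1-p) - (1-p)^{n+1}}{d} + \frac{p}{d} = \frac{1 - (1-p)^{n+1}}{d}$ that appeared in Theorem~\ref{theo:n_depol}. One should also note explicitly that the identity requires $\rho$ to be pure (so that $\mathrm{tr}(\rho^2) = \mathrm{tr}(\rho) = 1$ and the single-state form $F(\rho,\sigma) = \mathrm{tr}(\rho\sigma)$ applies), which is consistent with the hypothesis inherited from the preceding theorems.
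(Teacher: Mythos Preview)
Your proposal is correct. Your \emph{alternative} route --- write $\mathcal{E}^{n+1}(\rho) = (1-p)\,\mathcal{E}^n(\rho) + \tfrac{p}{d} I_d$ from the channel definition and then apply $F(\rho,\cdot)=\mathrm{tr}(\rho\,\cdot)$ with linearity of the trace and $\mathrm{tr}(\rho)=1$ --- is exactly the argument the paper gives, line for line. Your \emph{primary} route is a slight variant: instead of arguing at the density-matrix level, you invoke the closed form of Theorem~\ref{theo:fidelity} at indices $n$ and $n+1$ and verify the affine recurrence algebraically, recycling the cancellation from the inductive step of Theorem~\ref{theo:n_depol}. Both are valid; the paper's version has the mild advantage that it needs only the channel definition and purity of $\rho$ (not the full closed form of Theorem~\ref{theo:fidelity}), so the corollary would stand even if Theorem~\ref{theo:fidelity} were stated later. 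Your version has the advantage of making explicit that the recurrence is equivalent to the closed form, which is arguably the point of labeling this a corollary.
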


\begin{proof}
    \begin{align}
        F\left( \rho, \mathcal{E}^{n+1} (\rho) \right) 
            &= tr\left( \rho \cdot \mathcal{E}^{n+1} (\rho) \right) \\
            &= tr\left( \rho \cdot \left( (1-p) \mathcal{E}^n(\rho) + \frac{p}{d} I_d \right) (\rho) \right) \\
            &= (1-p) tr\left( \rho \cdot \mathcal{E}^n(\rho) \right) + \frac{p}{d} tr(\rho I_d) \\
            &= (1-p) F\left( \rho, \mathcal{E}^n(\rho) \right) + \frac{p}{d}
    \end{align}
\end{proof}

Theorem \ref{theo:two-qubit_depol} extends the previous results by focusing on a composite quantum state. We assume that the depolarizing error acting jointly on both subsystems can be decomposed into independent errors on each subsystem, leveraging the \textit{multiplicativity under tensor product} property of fidelity (Equation (\ref{eq:fidelity_tensor})). This enables us to track the fidelity of individual qubits in a multi-qubit system, which will be the foundation of the proposed fidelity estimation algorithm, described in Section \ref{sec:fidelity_alg}.

\begin{theorem}[Fidelity loss in a composite state]
\label{theo:two-qubit_depol}
    Let $\rho = \rho_A \otimes \rho_B$ be a pure, product state, and $\rho' = \mathcal{E}^i(\rho_A) \otimes \mathcal{E}^j(\rho_B)$ be the same product state after each component has undergone its respective depolarization process. The fidelity loss that a new depolarizing channel $\mathcal{E}(\rho') = \mathcal{E} \left( \mathcal{E}^i(\rho_A) \otimes \mathcal{E}^j(\rho_B) \right)$ applies to each previously depolarized quantum state is:

    \begin{eqnarray}
        F \left( \rho_A, \mathcal{E}(\rho')_A \right) = \sqrt{1-p} \cdot F \left(\rho_A, \mathcal{E}^i(\rho_A) \right) + \eta \\
        F \left( \rho_B, \mathcal{E}(\rho')_B \right) = \sqrt{1-p} \cdot F \left(\rho_B, \mathcal{E}^j(\rho_B) \right) + \eta
    \end{eqnarray}
    
    where $\mathcal{E}(\rho')_A$ and $\mathcal{E}(\rho')_B$ are the subsystems $A$ and $B$ in the composite state $\mathcal{E}(\rho')$, with dimensionality $d_{AB}$, and:
    \begin{align}
        \eta = & \frac{1}{2} \bigg( -\sqrt{1-p} \left( F\left(\rho_A, \mathcal{E}^i(\rho_A)\right) + F\left(\rho_B, \mathcal{E}^j(\rho_B)\right) \right) 
        \nonumber\\
        &+ \sqrt{(1-p) \left( F\left(\rho_A, \mathcal{E}^i(\rho_A)\right) + F\left(\rho_B, \mathcal{E}^j(\rho_B)\right) \right)^2 + \frac{4p}{d_{AB}} } \quad \bigg)
    \end{align}
\end{theorem}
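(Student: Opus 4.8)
The plan is to collapse the statement into a single scalar quadratic equation in $\eta$ by combining three ingredients: the effect of one depolarizing channel on fidelity (Corollary~\ref{theo:fidelity_loss}), the multiplicativity of fidelity under tensor products (Equation~(\ref{eq:fidelity_tensor})), and the symmetric ansatz that each subsystem absorbs the joint depolarization through a common per-subsystem multiplicative factor $\sqrt{1-p}$ together with a common additive correction $\eta$.

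First I would note that the short computation in the proof of Corollary~\ref{theo:fidelity_loss} never uses that its second argument is literally $\mathcal{E}^n(\rho)$: it only uses that $\rho$ is pure, that $\rho$ has unit trace, and that $\mathcal{E}$ acts as $\sigma \mapsto (1-p)\sigma + \tfrac{p}{d}I_d$. Applying it to the composite pure state $\rho = \rho_A \otimes \rho_B$ of dimension $d_{AB}$, with $\sigma = \rho'$, gives
\[
    F(\rho, \mathcal{E}(\rho')) = (1-p)\,F(\rho, \rho') + \frac{p}{d_{AB}}.
\]
Since $\rho = \rho_A \otimes \rho_B$ and $\rho' = \mathcal{E}^i(\rho_A) \otimes \mathcal{E}^j(\rho_B)$ are product states, multiplicativity gives $F(\rho,\rho') = F(\rho_A, \mathcal{E}^i(\rho_A))\,F(\rho_B, \mathcal{E}^j(\rho_B))$, and, using the hypothesis that the joint error decomposes into independent subsystem errors so that $\mathcal{E}(\rho') = \mathcal{E}(\rho')_A \otimes \mathcal{E}(\rho')_B$, also $F(\rho, \mathcal{E}(\rho')) = F(\rho_A, \mathcal{E}(\rho')_A)\,F(\rho_B, \mathcal{E}(\rho')_B)$. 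Abbreviating $f_A, f_B$ for the two old per-subsystem fidelities and $g_A, g_B$ for the new ones, the two displays combine into the single constraint
\[
    g_A\,g_B = (1-p)\,f_A f_B + \frac{p}{d_{AB}}.
\]

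Next I would impose $g_A = \sqrt{1-p}\,f_A + \eta$ and $g_B = \sqrt{1-p}\,f_B + \eta$ with the \emph{same} $\eta$: the factor $\sqrt{1-p}$ per subsystem is the natural choice making the product of the two leading terms equal $(1-p)f_A f_B$, and the additive term is taken equal for $A$ and $B$ because the joint channel treats them symmetrically. Substituting the ansatz into the constraint and cancelling the $(1-p)f_A f_B$ terms leaves
\[
    \eta^2 + \sqrt{1-p}\,(f_A + f_B)\,\eta - \frac{p}{d_{AB}} = 0,
\]
whose roots, by the quadratic formula, are $\eta = \tfrac12\big(-\sqrt{1-p}(f_A+f_B) \pm \sqrt{(1-p)(f_A+f_B)^2 + \tfrac{4p}{d_{AB}}}\big)$. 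I would keep the $+$ root, which is exactly the $\eta$ in the statement, and discard the $-$ root on the grounds that it is negative and would force $g_A$ (or $g_B$) below zero, contradicting that a fidelity is nonnegative, whereas the $+$ root vanishes as $p \to 0$, consistently with the noiseless expectation $g_A \to f_A$, $g_B \to f_B$.

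The two invocations of multiplicativity and the algebra of the quadratic are routine. The point I would flag as the real content — and state as an assumption rather than try to derive — is the ansatz itself: the scalar constraint $g_A g_B = (1-p)f_A f_B + p/d_{AB}$ alone does not determine $g_A$ and $g_B$ individually, so writing the per-subsystem fidelity loss as a factor $\sqrt{1-p}$ plus a shared correction $\eta$ is the precise encoding of the ``independent errors on each subsystem'' modeling hypothesis announced before the theorem, and the choice of root needs the brief positivity/limit argument above.
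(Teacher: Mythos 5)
Your proof follows essentially the same route as the paper's: apply the single-channel fidelity relation from Corollary~\ref{theo:fidelity_loss} to the composite state, use multiplicativity on both sides, split $(1-p)$ as $\sqrt{1-p}\cdot\sqrt{1-p}$ under the symmetric ansatz $g_{A/B}=\sqrt{1-p}\,f_{A/B}+\eta$, and solve the resulting quadratic for $\eta$, keeping the nonnegative root. Your explicit flagging of the ansatz as the modeling hypothesis (the scalar constraint alone does not fix $g_A$ and $g_B$ individually) and your positivity/limit justification for discarding the negative root are somewhat more careful than the paper's terse ``exists $\eta>0$'' step, but the substance is identical.
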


\begin{proof}
    \begin{align}
        F(\rho, \mathcal{E}(\rho'))
            &= (1-p) \cdot F\left(\rho, \rho' \right) + \frac{p}{d_{AB}}\\
            &= (1-p) \cdot F\left(\rho_A, \mathcal{E}^i(\rho_A)\right) \cdot F\left(\rho_B, \mathcal{E}^j(\rho_B)\right) + \frac{p}{d_{AB}}\\
            &= \sqrt{1-p} \cdot F\left(\rho_A, \mathcal{E}^i(\rho_A)\right) \cdot \sqrt{1-p} \cdot F\left(\rho_B, \mathcal{E}^j(\rho_B)\right) + \frac{p}{d_{AB}} 
            \label{eq:eta1}\\ & \qquad\texttt{exists $\eta > 0$ s.t.} \nonumber\\
            &= \left(\sqrt{1-p} \cdot F\left(\rho_A, \mathcal{E}^i(\rho_A)\right) + \eta \right) \cdot \left(\sqrt{1-p} \cdot F\left(\rho_B, \mathcal{E}^j(\rho_B)\right) + \eta \right)
            \label{eq:eta2}\\
            &= F \left( \rho_A, \mathcal{E}(\rho')_A \right) \cdot F \left( \rho_B, \mathcal{E}(\rho')_B \right)
    \end{align}

    From Equations (\ref{eq:eta1}) and (\ref{eq:eta2}) we have:
    \begin{equation}
        \eta^2 + \eta \sqrt{1-p} \left( F\left(\rho_A, \mathcal{E}^i(\rho_A)\right) + F\left(\rho_B, \mathcal{E}^j(\rho_B)\right) \right) - \frac{p}{d_{AB}} = 0
    \end{equation}
    \begin{align}
        \eta =& \frac{1}{2} \bigg( -\sqrt{1-p} \left( F\left(\rho_A, \mathcal{E}^i(\rho_A)\right) + F\left(\rho_B, \mathcal{E}^j(\rho_B)\right) \right) 
        \nonumber\\
        &\pm \sqrt{(1-p) \left( F\left(\rho_A, \mathcal{E}^i(\rho_A)\right) + F\left(\rho_B, \mathcal{E}^j(\rho_B)\right) \right)^2 + \frac{4p}{d_{AB}} } \quad \bigg)
    \end{align}
    Since $\sqrt{1-p} \left( F\left(\rho_A, \mathcal{E}^i(\rho_A)\right) + F\left(\rho_B, \mathcal{E}^j(\rho_B)\right) \right) > 0$, and $\eta > 0$ we can discard the negative solution, resulting in
    \begin{align}
        \eta =& \frac{1}{2} \bigg( -\sqrt{1-p} \left( F\left(\rho_A, \mathcal{E}^i(\rho_A)\right) + F\left(\rho_B, \mathcal{E}^j(\rho_B)\right) \right) 
        \nonumber\\
        &+ \sqrt{(1-p) \left( F\left(\rho_A, \mathcal{E}^i(\rho_A)\right) + F\left(\rho_B, \mathcal{E}^j(\rho_B)\right) \right)^2 + \frac{4p}{d_{AB}} } \quad \bigg)
    \end{align}
    Which guarantees $\eta \geq 0$.
\end{proof}

Finally, Theorem \ref{theo:fidelity_entanglement} addresses the effect of depolarization on the fidelity of entangled qubits. This theorem explores the fidelity loss in a qubit entangled with another quantum state when a depolarizing channel is applied only to a subset of the entangled system.

\begin{theorem}[Fidelity loss in an entangled system]
\label{theo:fidelity_entanglement}
    Let $\rho_{AB}$ be a composite state in the $\mathcal{H}_A \otimes \mathcal{H}_B$ space. Suppose a depolarizing channel $\mathcal{E}$ is applied only to subsystem $A$. Let $\mathcal{E}_A (\rho_{AB}) = (1-p) \rho_{AB} + \frac{p}{d_A} \left( I_{d_A} \otimes tr_A(\rho_{AB}) \right)$ be the resulting state.

    The fidelity of subsystem $A$ between $\rho_{AB}$ and $\mathcal{E}_A (\rho_{AB})$ is bounded by:
    $$(1-p) \leq F\left(\rho_{AB}, \mathcal{E}_A (\rho_{AB})\right) \leq (1-p) + \frac{p}{d_A}$$
\end{theorem}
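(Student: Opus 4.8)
The plan is to collapse the fidelity to a single scalar — the purity of the reduced state on $B$ — and then bound that scalar by elementary inequalities. As in the preceding theorems I take $\rho_{AB}$ to be pure, so that $F(\rho_{AB},\mathcal{E}_A(\rho_{AB})) = \mathrm{tr}(\rho_{AB}\,\mathcal{E}_A(\rho_{AB}))$.

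First I would substitute the definition of $\mathcal{E}_A$ and split the trace by linearity into $(1-p)\,\mathrm{tr}(\rho_{AB}^2)$ and $\tfrac{p}{d_A}\,\mathrm{tr}\bigl(\rho_{AB}(I_{d_A}\otimes\rho_B)\bigr)$, writing $\rho_B := \mathrm{tr}_A(\rho_{AB})$. The first term equals $1-p$ since $\rho_{AB}$ is pure. For the cross term I would use the Schmidt decomposition $|\psi\rangle_{AB} = \sum_k \sqrt{\lambda_k}\,|k\rangle_A|k\rangle_B$, so that $\rho_B = \sum_k \lambda_k\,|k\rangle\langle k|_B$ and a short computation gives $\langle\psi|(I_{d_A}\otimes\rho_B)|\psi\rangle = \sum_k \lambda_k^2 = \mathrm{tr}(\rho_B^2)$. (Equivalently, using the identity $\mathrm{tr}_A[X(I_{d_A}\otimes Y)] = (\mathrm{tr}_A X)\,Y$ one gets $\mathrm{tr}_A[\rho_{AB}(I_{d_A}\otimes\rho_B)] = \rho_B^2$ and then traces over $B$.) This produces the exact identity
\[
F(\rho_{AB},\mathcal{E}_A(\rho_{AB})) = (1-p) + \frac{p}{d_A}\,\mathrm{tr}(\rho_B^2).
\]

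It then remains to bound the purity $\mathrm{tr}(\rho_B^2)$. The upper bound $\mathrm{tr}(\rho_B^2) = \sum_k \lambda_k^2 \le \bigl(\sum_k \lambda_k\bigr)^2 = 1$ follows since the Schmidt coefficients are nonnegative and sum to one, with equality exactly when $\rho_{AB}$ is a product state — consistent with Theorem \ref{theo:fidelity} at $n=1$. The lower bound $\mathrm{tr}(\rho_B^2) \ge 0$ is immediate from positive semidefiniteness. Substituting these into the identity above yields $(1-p) \le F(\rho_{AB},\mathcal{E}_A(\rho_{AB})) \le (1-p) + \frac{p}{d_A}$. I would also note that the lower bound can in fact be sharpened to $(1-p) + \tfrac{p}{d_A\,\mathrm{rank}(\rho_B)}$ via $\mathrm{tr}(\rho_B^2) \ge 1/\mathrm{rank}(\rho_B)$, but only the stated weaker form is needed.

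The main obstacle is the evaluation of the cross term $\mathrm{tr}(\rho_{AB}(I_{d_A}\otimes\rho_B))$: one must carefully track which subsystem each factor acts on and recognize that tracing out $A$ effectively replaces $\rho_{AB}$ by a second copy of $\rho_B$. The Schmidt-decomposition route makes this transparent and avoids operator bookkeeping, so I would adopt it as the main argument and present the partial-trace identity only as an alternative. The reduction of $F$ to a trace and the purity inequalities are routine.
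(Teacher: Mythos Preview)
Your proposal is correct and follows essentially the same route as the paper: reduce the fidelity to $(1-p)+\tfrac{p}{d_A}\,\mathrm{tr}(\rho_B^2)$ and then bound the purity of $\rho_B$ in $[0,1]$. Your evaluation of the cross term via the Schmidt decomposition (and the equivalent partial-trace identity) is in fact more transparent than the paper's intermediate manipulation, and your remark on the sharper $1/\mathrm{rank}(\rho_B)$ lower bound is a correct but unused refinement.
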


\begin{proof}
    \begin{align}
        F \left(\rho_{AB}, \mathcal{E}_A(\rho_{AB}) \right) 
            &= tr \left(\rho_{AB} \cdot \mathcal{E}_A(\rho_{AB})\right) \\
            &= tr \left( \rho_{AB} \cdot \left[ (1-p) \rho_{AB} + \frac{p}{d_A} \left( I_{d_A} \otimes tr_A(\rho_{AB}) \right) \right] \right) \\
            &= (1-p) tr(\rho_{AB}^2) + \frac{p}{d_A} tr\left( \rho_{AB} \cdot \left( I_{d_A} \otimes tr_A(\rho_{AB}) \right) \right) \\
            &= (1-p) + \frac{p}{d_A} tr\left( tr_B(\rho_{AB}) \right) tr\left( \left( tr_A(\rho_{AB}) \right)^2 \right) \\
            &= (1-p) + \frac{p}{d_A} tr\left( \left( tr_A(\rho_{AB}) \right)^2 \right)
    \end{align}

    If $\rho_{AB}$ is a pure product (\textit{i.e.}, separable) state, then $\rho_B = tr_A(\rho_{AB})$ is also a pure state and $tr\left( \left( tr_A(\rho_{AB}) \right)^2 \right) = 1$. If $\rho_{AB}$ is entangled (and thus not a product state), then $\rho_B$ is not pure, and $tr\left( \left( tr_A(\rho_{AB}) \right)^2 \right) < 1$.

    The exact value of $tr\left( \left( tr_A(\rho_{AB}) \right)^2 \right)$ lies between 0 and 1 and depends on the state $\rho_{AB}$, which can only be known through the simulation of the whole quantum state. From these values, we derive upper and lower bounds for the loss of fidelity:
    $$(1-p) \leq F(\rho_{AB}, \mathcal{E}_A (\rho_{AB})) \leq (1-p) + \frac{p}{d_A}$$
\end{proof}

Theorem \ref{theo:fidelity_entanglement} shows that the fidelity is upper bounded by the results found in Corollary \ref{theo:fidelity_loss}, obtaining a lower fidelity (higher fidelity loss) whenever the qubit being depolarized is entangled with other systems.

By presenting these theorems, a rigorous framework for understanding and quantifying the effects of depolarizing noise on quantum states and circuits is built, providing the foundation for the proposed fidelity estimation algorithm (see Section \ref{sec:fidelity_alg}), which, unlike ESP, QVA, or other analytical estimation techniques, is based on the theoretical impact of quantum errors on the quantum states, and not only on the device calibration data.

\subsection{Theoretical Model Validation}
\label{sec:theoretical_validation}
The algorithm developed using this theoretical framework avoids simulating the quantum circuit, making it inherently scalable to larger systems. This scalability comes at the cost of introducing a hyperparameter ($p_{ent} \in [0,1]$), which accounts for the entanglement level of the system to remain consistent with the findings in Theorem \ref{theo:fidelity_entanglement}. By adjusting $p_{ent}$, the model can effectively approximate the fidelity bounds for varying levels of entanglement without requiring computationally expensive simulations, obtaining lower and upper bounds on circuit fidelity for $p_{ent}=1$ and $p_{ent}=0$ respectively.

Figure \ref{fig:simulation_comparison} presents a comparison between the fidelities predicted by our theoretical model and those obtained from statistical fault injection simulations. The results span 93 quantum circuits, varying in size from 2 to 8 qubits and containing up to 5000 gates. Circuits are compiled to match the superconducting processor's restrictions used in latter experiments (see Section \ref{sec:execution}).

As expected, the fidelity predictions align closely with the simulated values, though slight discrepancies arise due to the theoretical upper and lower bounds employed in our model. To better illustrate these discrepancies, Figure \ref{fig:simulation_comparison} also shows the absolute difference between the predicted (with $p_{ent} = 0.5$) and simulated fidelities. The observed deviations remain within a reasonable range (for all data points, the maximum absolute fidelity difference remains under 0.07), confirming that our approach provides a robust approximation of circuit fidelity while being significantly more efficient than direct simulation.

\begin{figure}
    \centering
    \includegraphics[width=1\linewidth]{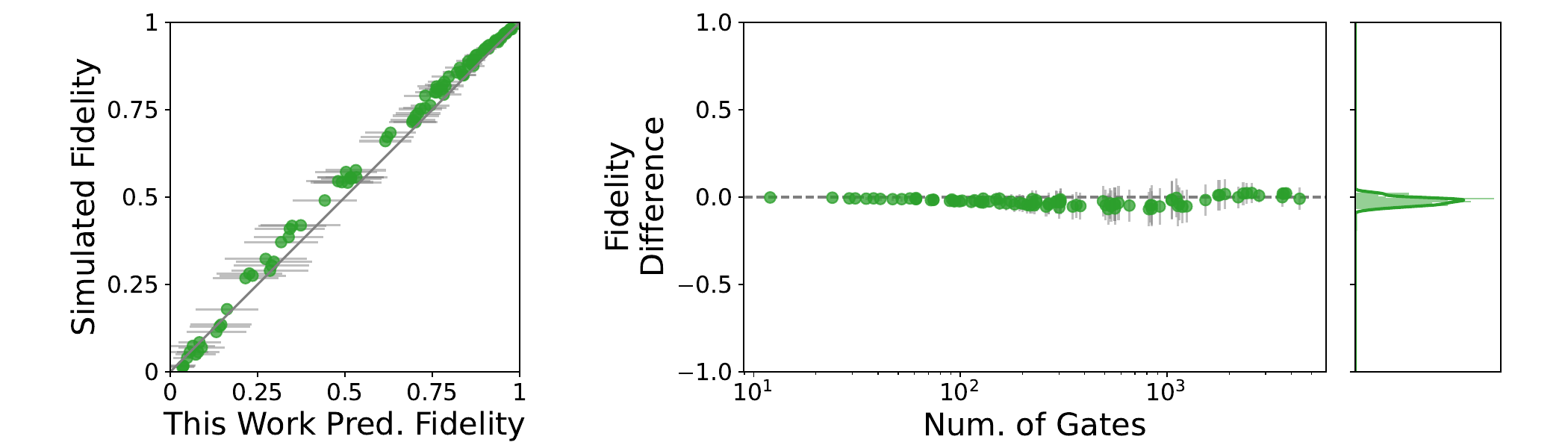}
    \caption{Comparison of predicted versus simulated fidelities with single-qubit gates depolarization of $p_1=10^{-3}$ and two-qubit gates depolarization of $p_2=5\cdot10^{-3}$. In the left plot, each point illustrates the predicted ($x$-axis) against the simulated fidelity ($y$-axis), with the diagonal line ($x=y$) indicating perfect predictions. Horizontal lines indicate the range of predicted fidelities arising from the variability of the hyperparameter $p_{ent}$. The right plot shows the difference in fidelity ($fid_{pred} - fid_{simul}$).}
    \label{fig:simulation_comparison}
\end{figure}

The results in Figure \ref{fig:simulation_comparison} also reveal that the fidelity range determined by $p_{ent}=0$ and $p_{ent}=1$ is relatively narrow, in contrast to QVA, where the hyperparameter $w$ can cause fidelity estimates to vary by as much as 0.8 (see Figure \ref{fig:sota_comparison}). This demonstrates that an intermediate value, such as $p_{ent}=0.5$, can reasonably estimate the fidelity. Consequently, the inability to precisely determine the level of entanglement does not significantly impact the accuracy of the fidelity estimation.

\section{Methods}
\label{sec:methods}

\subsection{Fidelity Estimation Algorithm}
\label{sec:fidelity_alg}

We propose the fidelity estimation algorithm based on the theoretical findings about fidelity loss under depolarizing noise presented in Section \ref{sec:theorems}. This algorithm predicts the expected fidelity of a quantum circuit based on its structure and the calibration data of the quantum processor on which the circuit is being executed, assuming that each gate introduces depolarizing noise. It is summarized in Algorithm \ref{alg:fidelity_estimation}.

\begin{algorithm}
\caption{Fidelity Estimation}
\label{alg:fidelity_estimation}
\begin{algorithmic}[0]
\Require Quantum circuit \texttt{circ}, $p_{ent} \in [0, 1]$ 
\Ensure $F_{circ}$ Circuit Fidelity
\State $\mathcal{Q} \gets \texttt{circ.qubits}$ \Comment{Set of qubits in the circuit}
\State $F_q \gets 1 \quad \forall q \in \mathcal{Q}$ \Comment{Initialize qubit-wise fidelity to 1}
\For {$\texttt{op} \in \texttt{circ}$}
    \State $d \gets 2 ^ {dim(\texttt{op.qubits})}$
    \State $p \gets \texttt{op.depolarization}$
    \If{$d == 2$} \Comment{Single-qubit gate}
        \State $q_i \gets \texttt{op.qubits}$
        \State $F_{q_i} \gets (1-p) \cdot F_{q_i} + (1-p_{ent}) \cdot \frac{p}{d}$ 
    \ElsIf{$d == 4$} \Comment{Two-qubit gate}
        \State $q_i, q_j \gets \texttt{op.qubits}$
        \State $\eta \gets \frac{1}{2} \left( \sqrt{ (1-p) \left( F_{q_i} + F_{q_j} \right)^2 + p} - \sqrt{1-p} \left( F_{q_i} + F_{q_j} \right)\right)$
        \State $F_{q_i} \gets \sqrt{1-p} \cdot F_{q_i} + (1-p_{ent}) \cdot \eta$
        \State $F_{q_j} \gets \sqrt{1-p} \cdot F_{q_j}+ (1-p_{ent}) \cdot \eta$
    \EndIf
\EndFor
\State \textbf{return} $\prod_{q \in \mathcal{Q}} F_q$
\end{algorithmic}
\end{algorithm}

The calibration data provides gate fidelities for all single- and two-qubit gates ($d=2$ and $d=4$ respectively) of a given quantum processor. From this, we derive the depolarization parameter $p$ for each single- and two-qubit gate, ensuring that the gate fidelity $F_g$ from calibration aligns with the theoretical fidelity loss due to a depolarization factor $p$ (obtained in Corollary \ref{theo:fidelity_loss}), which is given by:

\begin{equation}
 p = \frac{d (F_g - 1)}{1-d}
\end{equation}

By leveraging the unitary invariance of both the depolarizing channel and quantum fidelity, all gates in the circuit are effectively replaced with fidelity-equivalent depolarizing channels. Each qubit's independent fidelity is initialized to 1 and iteratively reduced based on the depolarizing channels applied to each quantum state. The global fidelity is determined by the \textit{multiplicativity under tensor product} property of fidelity. Specifically, the circuit's overall fidelity ($F_{circ}$) is obtained as the product of the individual qubit-wise fidelities ($F_q$):

\begin{equation}
    F_{circ} = \prod_{q \in \mathcal{Q}} F_q
\end{equation}

As proved in Theorem \ref{theo:fidelity_entanglement}, the fidelity loss induced by a depolarizing channel is influenced by the level of entanglement of the involved qubits. Accounting for this would require simulating all quantum states and tracking how each gate impacts the states throughout the circuit (\textit{i.e.}, to simulate the whole computation classically). This approach, however, is computationally prohibitive for systems with more than a few dozen qubits \cite{Boixo2018}. To address this, we introduce a hyperparameter $p_{ent}$, which defines the level of entanglement in the quantum states. No entanglement is assumed when $p_{ent}=0$, minimizing the fidelity loss and yielding an upper bound on the circuit's fidelity. Conversely, setting $p_{ent}=1$ assumes maximum entanglement, resulting in a more significant fidelity loss for each gate and providing a lower bound on the circuit's fidelity.

It is essential to incorporate the effects of coherence errors \cite{nielsen_chuang_2010, khatri2020information, Chapeau_Blondeau_2022} into the fidelity estimation model to validate the approach presented in this work against real quantum hardware, as these errors are a significant source of decoherence in current quantum hardware. The exact decoherence induced by relaxation ($T_1$) and dephasing ($T_2$) depends on the particular quantum state (\textit{i.e.}, they are not unitary-invariant), which can only be fully assessed by simulating the whole quantum circuit. Instead, we assume a worst-case scenario where qubits undergo maximal decoherence, thus providing an upper bound on fidelity loss.

The resulting decoherence upper bound is incorporated into Algorithm \ref{alg:fidelity_estimation} by scaling the qubit-wise fidelity after the execution of each layer of the circuit (\textit{i.e.}, a set of gates that can be applied simultaneously on independent qubits) to account for coherence errors. Specifically, the relaxation time $T_1$ and dephasing time $T_2$ for each qubit are obtained from the device calibration data, and the execution time for each layer is estimated based on the duration of the gates that can be executed in parallel. After each layer, the fidelity of each qubit (even the ones that remained idling during that layer) is scaled as:

\begin{equation}
\label{eq:thermal_depol_eq}
    \forall{q \in \mathcal{Q}} : F_q = F_q \cdot e^{-t_{\text{layer}}/T_1^q} \cdot \left( \frac{1}{2} e^{-t_{\text{layer}}/T_2^q} + 0.5 \right)
\end{equation}

This equation accounts for both relaxation and dephasing effects on fidelity, consistent with how IBMQ calibrates coherence times, modeling $T_1$ and $T_2$ using $A \cdot e^{-t/T} + B$. Resulting in $A \simeq 1$ and $B \simeq 0$ for $T_1$, and $A \simeq B \simeq 0.5$ for $T_2$ \cite{t1_ibm, t2_ibm}.

\subsection{Quantum Processor Execution}
\label{sec:execution}

This subsection details the experimental setup and workflow for evaluating circuit success rates and comparing fidelity estimation methods. Figure \ref{fig:workflow} provides an overview of the entire workflow.

We selected quantum circuits ranging from 2 to 8 qubits, obtained from the MQT Bench \cite{quetschlich2023mqtbench}. Each circuit $\mathcal{C}$ was concatenated with its inverse $\mathcal{C}^{-1}$, ensuring that the final quantum state is equal to the initial one, the all-zero state $|0\rangle ^{\otimes q}$ in our case, with a corresponding ideal measurement outcome of '0'$\times q$.

Before running the circuits on the actual quantum device, they went through the compilation process, which decomposed gates into native gates supported by the device \cite{10181370, 10.1145/3445814.3446718}, mapped logical qubits to physical qubits \cite{qubit_allocation}, and added routing operations \cite{li_2019_tackling, 10313754} to meet hardware connectivity constraints \cite{devoret2004superconductingqubitsshortreview}. We used Qiskit \cite{javadiabhari2024quantumcomputingqiskit} for all stages of the compilation process, ensuring consistency across all the evaluated circuits. The result is a compiled circuit optimized for execution on the chosen quantum processor.

\begin{figure}
    \centering
    \includegraphics[width=1\linewidth]{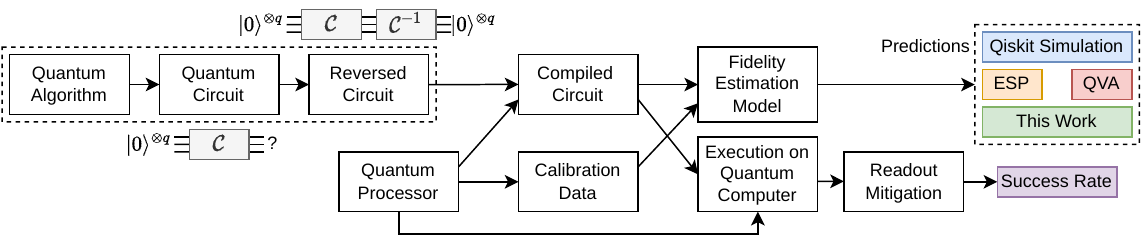}
    \caption{Workflow for obtaining fidelity predictions from the four estimation methods (Qiskit, ESP, QVA, proposed model, and proposed $+T_{1,2}$) and determining the success rate from real-device executions. The process includes circuit selection, compilation to match device-specific constraints, fidelity estimation based on noise models and calibration data, execution on a real quantum processor, and measurement error mitigation to compute the success rate. This workflow ensures a consistent comparison of predictions with hardware performance.}
    \label{fig:workflow}
\end{figure}

For each estimation method, fidelity predictions were made using the compiled circuit and calibration data from the quantum processor at the execution time. For Qiskit simulations, noise models were directly obtained from the Qiskit library \cite{javadiabhari2024quantumcomputingqiskit}, tailored to the target device, and simulations were conducted using the \texttt{density\_matrix} method to account for mixed-state evolution under noise.

Experiments were conducted on an IBM Eagle r3 processor \cite{chow_2021_ibm} with 127 qubits. The compiler selected the most reliable subset of physical qubits for each circuit. After execution, qubits were measured, and readout errors were mitigated using calibration data and assuming uncorrelated measurement errors between the qubits. The success rate is calculated as the ratio of correctly measured outcomes ('0'$\times q$) over the total number of executions (2048 per circuit).

This setup establishes a robust framework for evaluating the performance of the real quantum device against fidelity predictions from various estimation methods, providing valuable insights into each model's accuracy and dependability.

\section{Results}
\label{sec:results}

Having validated the model through simulation in Section \ref{sec:theoretical_validation}, we now assess its performance against real quantum hardware.

The fidelity for each compiled circuit in the benchmark is estimated using the proposed model alongside other methods, including Qiskit simulation (using noise models associated with the used processor), ESP, and QVA. We compare these predictions to the success rates obtained from executions on quantum hardware after readout error mitigation. Additionally, a version of each model incorporating coherence errors ($T_1$ and $T_2$), as detailed in Section \ref{sec:fidelity_alg}, is also considered to provide a more comprehensive assessment of actual executions. Since quantum hardware is subject to additional sources of error beyond those captured by the considered models, achieving an accurate prediction by considering only gate errors is impossible, making the inclusion of $T_1$ and $T_2$ essential for a robust validation against quantum hardware.

Figure \ref{fig:correlation} illustrates the measured success rate in the quantum device (IBM Q Eagle processor) after readout error mitigation (on the $y$-axis) versus the predicted fidelity (on the $x$-axis) for each estimation method. In this plot, a perfect prediction would place the data point along the diagonal. For the QVA and proposed models, horizontal lines indicate the influence of the hyperparameters in each algorithm ($w$ for QVA and $p_{ent}$ for the proposed model), with the reported point representing the average value between the extremes of these parameters. Most gate-error models tend to estimate a higher fidelity than the actual observation on the real device, which is expected since the models do not account for all types of errors present in actual hardware (\textit{e.g.}, coherence or crosstalk errors). This effect is evident when incorporating coherence errors into the estimation models, as it demonstrates a notable improvement in aligning predictions with the observed success rates for all models except for QVA, where it leads to a worse prediction, estimating a lower fidelity than the observed one.

\begin{figure}
    \centering
    \includegraphics[width=1\linewidth]{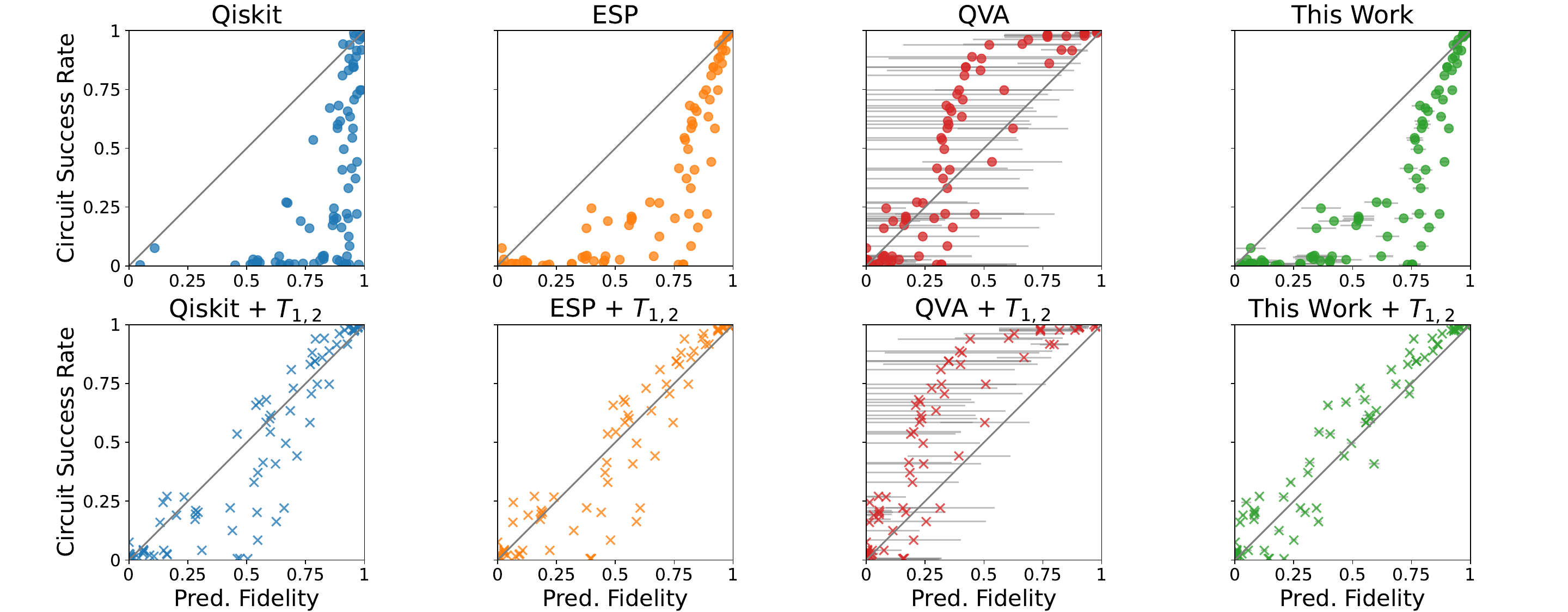}
    \caption{Comparison of predicted versus measured fidelities for all estimation models. Each scatter plot illustrates the predicted fidelity ($x$-axis) against the measured success rate ($y$-axis), with the diagonal line ($x=y$) indicating perfect predictions. For the QVA and the proposed model, horizontal lines indicate the range of predicted fidelities arising from the variability of their hyperparameters $w$ and $p_{ent}$, respectively.}
    \label{fig:correlation}
\end{figure}

Interestingly, QVA sometimes predicts a lower fidelity than the observed success rate despite not including coherence or crosstalk errors in its model. This discrepancy arises because QVA decreases the global fidelity twice for each two-qubit gate, doubling the reduction in fidelity compared to the actual gate error. When coherence errors are incorporated into the QVA model, the predicted fidelities decrease even further, as shown in the lower row of Figure \ref{fig:correlation}. Moreover, as mentioned before, the long horizontal lines for the QVA in Figure \ref{fig:correlation} highlight the significant influence that the hyperparameter for the cross-error ($w$) has on its predictions.

Surprisingly, we observe a significant discrepancy between Qiskit's fidelity estimation, obtained through full circuit simulation, and the measured fidelity on real hardware. Given that the simulation incorporates noise models inferred from the quantum device, we expected its predictions to be much closer to the actual fidelity. This deviation suggests potential inaccuracies in the extracted noise models.

Figure \ref{fig:fid_diff} presents the difference between each circuit's predicted and observed fidelity.

\begin{figure}
    \centering
    \includegraphics[width=1\linewidth]{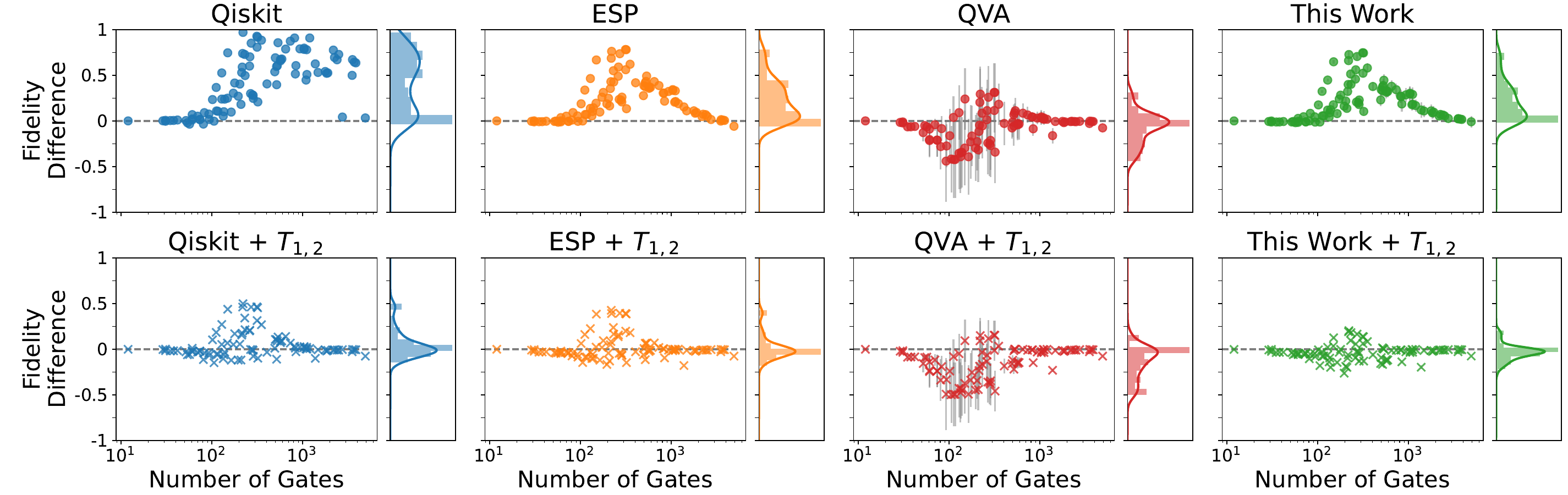}
    \caption{Comparison of the difference ($fid_{pred} - sr_{true}$) between predicted fidelity ($fid_{pred}$) and measured success rate ($sr_{true}$), plotted against the number of gates in each circuit for all fidelity estimation approaches.}
    \label{fig:fid_diff}
\end{figure}

Across all estimation techniques, we observe that, when only taking into account gate errors (upper row), for circuits with either very few or very many gates, the predictions tend to be relatively accurate. The explanation lies in the simplicity of estimating fidelity in these cases. For circuits with a very low number of gates, the fidelity is typically close to one, and fewer interactions among qubits reduce cross-talk errors, which are not considered by any of the evaluated models. Additionally, coherence noise effects are less significant due to the shorter execution times of such circuits. Conversely, the fidelity generally approaches zero for circuits with a high number of gates, causing the absolute difference between the predicted and real success rates to decrease.

To assess the accuracy of the proposed prediction models, we compute several regression metrics typically used to evaluate the performance of predictive models. These include the Mean Absolute Error (MAE) \cite{MAE}, which measures the average absolute difference between predicted and actual values; the Mean Squared Error (MSE) \cite{MSE}, which measures the average of the squares of the errors, that is, the average squared difference between the estimated values and the actual value; and the $R^2$ Score \cite{COLINCAMERON1997329}, which quantifies the proportion of variance in the observed data that is predictable from the model. Finally, the Pearson Correlation \cite{Pearson} coefficient is used to assess the linear relationship between the predicted and actual values, where a higher value indicates a stronger linear relationship. Additionally, the results with the cross-error parameter set to zero ($w=0$) for the QVA algorithm are also included since, in the analysis shown in \cite{10361567}, for most large circuits (over 100 two-qubit gates), the optimal $w$ was close to 0.

\begin{table}
    \centering
    \begin{tabular}{c|cccc}
         & \textbf{Mean Absolute} & \textbf{Mean Squared} & \textbf{$R^2$ Score} $\uparrow$ & \textbf{Pearson} \\
         & \textbf{Error (MAE) $\downarrow$} & \textbf{Error (MSE) $\downarrow$} &  & \textbf{Correlation} $\uparrow$ \\
         \hline
         \hline
        \textbf{Qiskit}                 & 0.422             & 0.273             & -0.831            & 0.606\\
        \textbf{Qiskit +$T_{1,2}$}     & 0.094             & 0.024             & 0.843             & 0.927\\
         \hline
        \textbf{ESP}                    & 0.229             & 0.098             & 0.341             & 0.827\\
        \textbf{ESP +$T_{1,2}$}        & 0.080             & 0.016             & 0.891             & 0.945\\
         \hline
        \textbf{QVA}                    & 0.134             & 0.034             & 0.769             & 0.903\\
        \textbf{QVA +$T_{1,2}$}        & 0.161             & 0.051             & 0.658             & 0.898\\
        \textbf{QVA$_{w=0}$}                    & 0.126             & 0.043             & 0.715             & 0.895\\
        \textbf{QVA$_{w=0}$ +$T_{1,2}$}        & 0.088             & 0.015             & 0.899             & 0.951\\
         \hline
        \textbf{This Work}              & 0.210             & 0.084             & 0.434             & 0.849\\
        \textbf{This Work +$T_{1,2}$} & \textbf{0.067}     & \textbf{0.008}    & \textbf{0.944}    & \textbf{0.976}\\
    \end{tabular}
    \caption{Regression metrics (Mean Absolute Error, Mean Squared Error, $R^2$ Score, and Pearson Correlation) for each fidelity estimation method: Qiskit simulation, ESP, QVA, and the proposed model, all with and without $T_{1,2}$. The best value for each metric is highlighted in bold, showcasing the relative performance of the approaches in predicting circuit fidelity with respect to the measured success rate on real hardware. Arrows indicate the preferred direction for each metric: for metrics where a higher value is better, the arrow points upwards ($\uparrow$), while for those where a lower value is more desirable, the arrow points downwards ($\downarrow$).}
    \label{tab:regression_metrics}
\end{table}

The results presented in Table \ref{tab:regression_metrics} highlight the performance of the proposed model (including $T_1$ and $T_2$), which consistently outperforms all other approaches across the selected metrics. It achieves the lowest MAE, improving between 15.63\% and 84.07\% over other techniques; the lowest MSE, with improvements ranging from 44.14\% to 96.92\%; the highest $R^2$ score, with gains of 4.96\% to 213.54\%; and the highest Pearson correlation, improving by 2.59\% to 61.05\%. The results including $T_{1,2}$ are a significant improvement across almost all models, demonstrating the importance of incorporating coherence noise into the fidelity estimation models. This suggests that including additional noise factors can significantly enhance the model's accuracy in predicting the fidelity of quantum circuits executed on a real device, making it a highly promising method for future quantum error modeling and prediction.

Since the QVA model already predicts a lower fidelity than the observed success rate even without accounting for $T_1$ and $T_2$ errors, incorporating coherence errors further worsens its predictions. This highlights the need for a stronger theoretical framework for more accurate fidelity estimation.

To ensure the robustness of our results, we repeated the experiments multiple times, conducting a total of six independent runs executed in different days. The proposed model consistently achieved the best performance across different calibration instances and quantum processors in all cases. Figure \ref{fig:day_pdf} provides a detailed comparison of the prediction differences observed across all estimation methods, as well as the obtained $R^2$ scores for each experiment.

The underlying code for this study and the data generated and analyzed are available in the GitHub repository, \url{https://github.com/escofetpau/Analytic-Model-of-Fidelity-under-Depolarizing-Noise}.

\section{Discussion}
\label{sec:discussion}
The performance differences between the considered fidelity estimation models reveal key insights into their underlying assumptions and applicability. The QVA model, despite showing competitive performance in regression metrics, predicts lower fidelities than those observed in real hardware for some circuits. This behaviour, combined with the exclusion of significant error sources such as coherence or crosstalk noise, suggests incomplete theoretical foundations. While the QVA model may provide reasonable predictions in some contexts, its tendency to deviate from hardware-observed outcomes raises questions about its reliability in diverse scenarios. It is, therefore, crucial not only to evaluate models based on final metrics but also to examine their assumptions and mechanisms, ensuring they align with the complexities of real devices.

\begin{figure}
    \centering
    \includegraphics[width=0.75\linewidth]{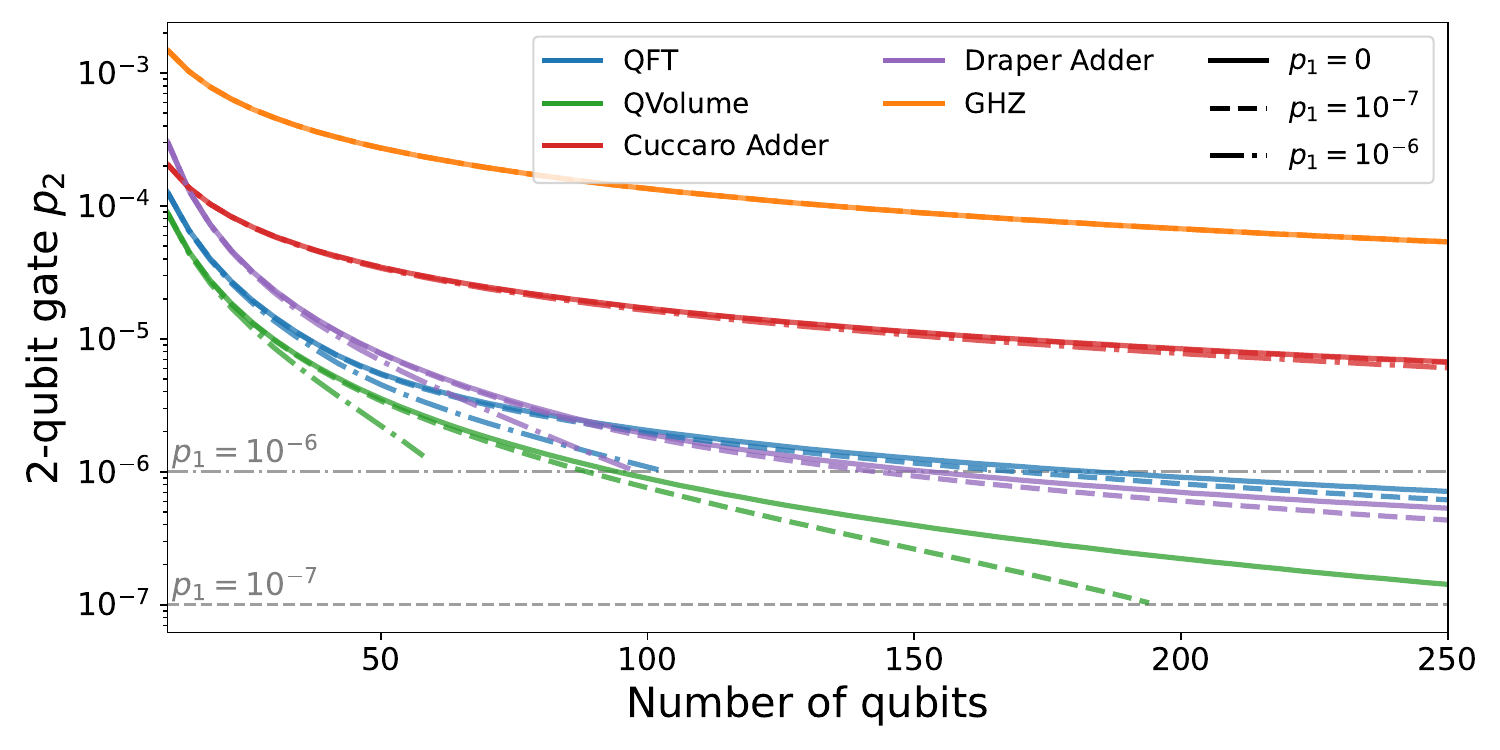}
    \caption{Required two-qubit depolarizing error rates ($p_2$) for achieving a circuit fidelity above 0.99 as the number of qubits increases, evaluated for various quantum circuits; QFT, Quantum Volume, Cuccaro Adder, Draper Adder, and GHZ State, obtained from Qiskit's circuit library \cite{javadiabhari2024quantumcomputingqiskit}. Each curve corresponds to a fixed single-qubit depolarizing error rate ($p_1$), with $p_1 = 0$, $10^{-7}$, and $10^{-6}$. The simulation is terminated when $p_2$ falls below $p_1$, indicating that achieving fidelity $> 0.99$ is not feasible for that number of qubits with the specified $p_1$. Coherence (\textit{i.e.}, $T_1$ and $T_2$) errors are excluded from this analysis, as the focus is solely on evaluating the impact of operational errors for design-oriented considerations.}
    \label{fig:link_design}
\end{figure}

Unlike simulations requiring extensive computational resources, the scalable depolarizing model proposed in this work can efficiently and accurately predict circuit fidelities, opening new avenues for exploration. For instance, it enables comparative studies of compilation strategies seeking to maximize computation fidelity, as showcased in existing literature \cite{10361567}. Additionally, the model facilitates the exploration of computational bounds for varying architectural parameters. By adjusting single- and two-qubit depolarizing factors ($p_1$ and $p_2$), the model can determine the conditions necessary to achieve high-fidelity computation. Figure \ref{fig:link_design} serves as an example. It illustrates, for various quantum circuits, the required $p_2$ values for achieving a circuit fidelity above 0.99 as the number of qubits increases for fixed $p_1$ values.

Such evaluations provide valuable insights for guiding quantum processor design. By establishing theoretical bounds on the required single- and two-qubit gate error rates, hardware developers can set performance targets before fabrication. This enables a more informed approach to optimizing hardware parameters, ensuring that the resulting quantum processors meet the fidelity requirements for practical computation.

Similarly, Figure \ref{fig:shor_computability} showcases the fidelity trends of Shor's algorithm \cite{shor_polynomial_1997} with 18 qubits for different combinations of $p_1$ and $p_2$. These results underscore the model's potential to guide architectural design and optimization, making it a powerful tool for advancing quantum computation.

These types of analysis are only possible thanks to the proposed estimation method's light scalability and high accuracy. If the model were inaccurate, it would lead to misleading predictions, potentially guiding hardware design and algorithmic decisions in the wrong direction. Additionally, without it being scalable, it would be unfeasible to thoroughly explore the full range of architectural parameters or extend the analysis to a higher number of qubits. By enabling precise and efficient fidelity estimations, the proposed approach allows for an in-depth evaluation of different quantum computing scenarios, supporting the development of more reliable and optimized quantum hardware and software.

\begin{figure}
    \centering
    \includegraphics[width=0.75\linewidth]{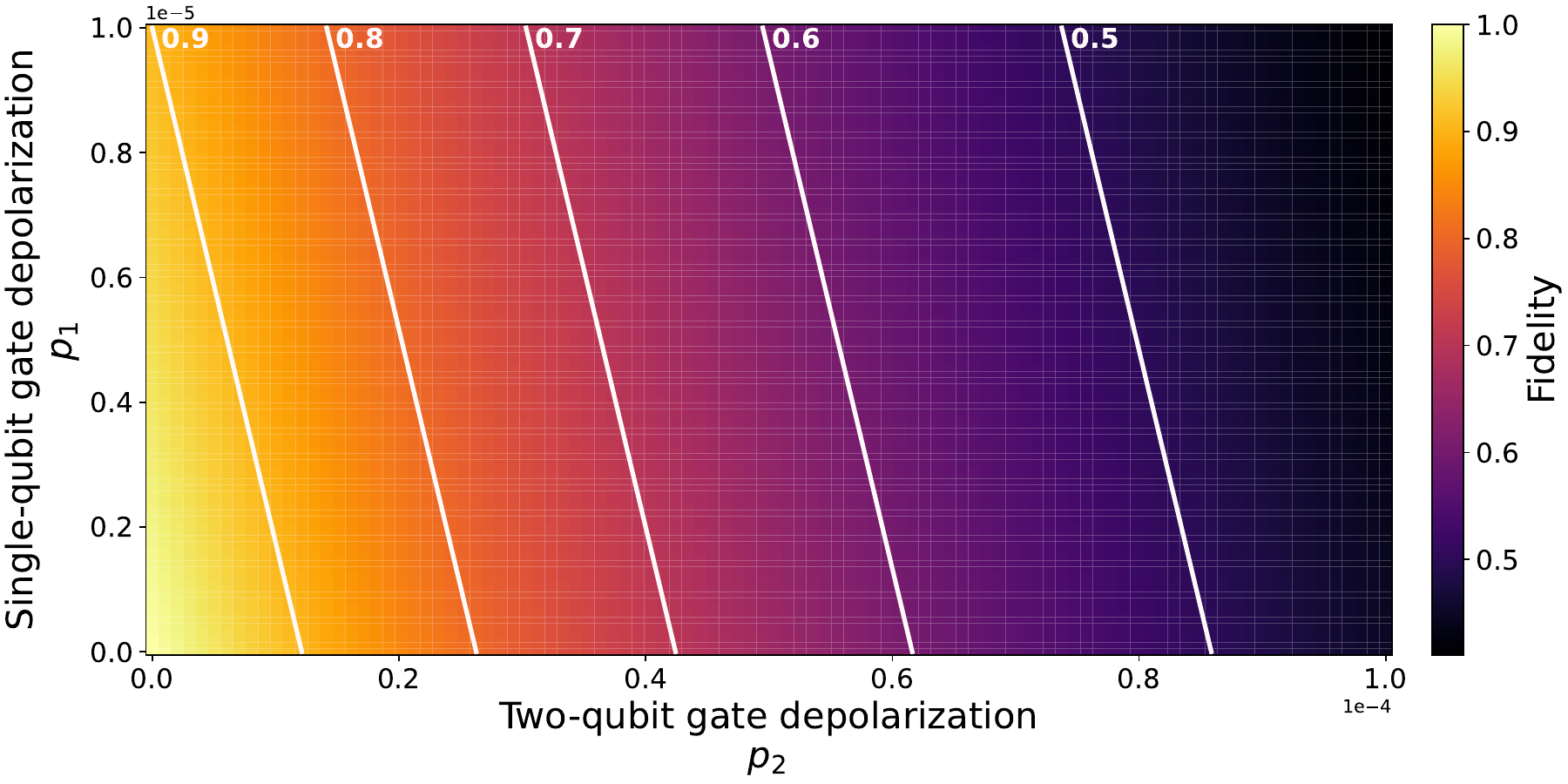}
    \caption{Estimated circuit fidelity for Shor’s algorithm \cite{shor_polynomial_1997}, containing 18 qubits, as a function of single-qubit ($p_1$) and two-qubit ($p_2$) gate depolarizing factors. The values of $p_1$ range from 0 to $10^{-5}$, while $p_2$ spans from 0 to $10^{-4}$. White contour lines indicate fidelity thresholds, illustrating the degradation of circuit fidelity as gate errors increase. These boundaries provide insight into the interplay between single- and two-qubit gate errors and their collective impact on algorithm performance. Coherence (\textit{i.e.}, $T_1$ and $T_2$) errors are excluded from this analysis, as the focus is solely on evaluating the impact of operational errors for design-oriented considerations.}
    \label{fig:shor_computability}
\end{figure}

\section{Conclusions}
\label{sec:conclusions}
Fidelity serves as the ultimate and most reliable metric for assessing the performance of quantum computations, directly reflecting the impact of noise and imperfections on quantum circuits. Accurate fidelity estimation is crucial for benchmarking quantum processors, optimizing compilation strategies, and guiding architectural decisions.

In this work, we have shown that state-of-the-art fidelity estimation techniques often fail to predict the performance of real-world quantum processors accurately. The discrepancies observed between predictions and actual hardware executions highlight the limitations of existing approaches, which either oversimplify noise models or lack scalability for large-scale quantum systems.

To address these challenges, we developed a theoretical framework that analytically characterizes the effects of depolarizing noise on quantum circuit fidelity. The proposed model provides an efficient and scalable approach to fidelity estimation, leveraging device calibration data to offer a more precise assessment of quantum computation reliability.

We validated our method against alternative estimation techniques and real hardware executions on IBM Q processors, consistently achieving the most accurate predictions across all tested scenarios. These results confirm the robustness of our approach in practical quantum computing environments.

Beyond its immediate application in fidelity estimation, our model's scalability enables a wide range of use cases, including hardware design optimization, performance evaluation of emerging quantum architectures, and feasibility analysis of large-scale computations. By providing a reliable and computationally efficient methodology, our work contributes to the advancement of quantum computing by offering a powerful tool for designing and assessing both quantum processors and quantum circuits.

\section*{Acknowledgements}
This work was supported by the European Commission (QUADRATURE: 101099697, WINC: 101042080), MCIN and NextGenerationEU (QCOMM-CAT), Project PCI2022-133004 funded by MCIN/AEI/10.13039/501100011033, by the Ministry for Digital Transformation and of Civil Service of the Spanish Government through the QUANTUM ENIA project call - Quantum Spain project, and by the European Union through the Recovery, Transformation and Resilience Plan - NextGenerationEU within the framework of the Digital Spain 2026 Agenda, and by the European Union NextGenerationEU/PRTR, and the ICREA Academia Award 2024. P.E. acknowledges support from an FPI-UPC grant funded by UPC and Banco Santander.

\appendix
\section{Robustness of Fidelity Estimation Across Multiple Hardware Executions}
Given the inherent variability in quantum hardware performance due to fluctuations in environmental conditions, calibration drift, and statistical noise in measurement processes, we conducted multiple independent experiments to validate the reliability of our fidelity estimation approach. Each experiment consisted of executing the same set of quantum circuits under comparable conditions, with device calibration data obtained prior to the execution to account for changes in hardware properties.

By repeating the experiments six times, we aimed to assess the consistency of our model’s predictions and its ability to generalize across different executions. The results confirm that our proposed approach systematically provides the most accurate fidelity estimates, demonstrating its robustness in real-world quantum computing scenarios. Figure \ref{fig:day_pdf} details the variations in fidelity predictions across experiments for all estimation methods, including coherence errors, and presents the corresponding $R^2$ scores to quantify the predictive accuracy in each case.

\begin{figure}[t]
    \centering
    \includegraphics[width=0.65\linewidth]{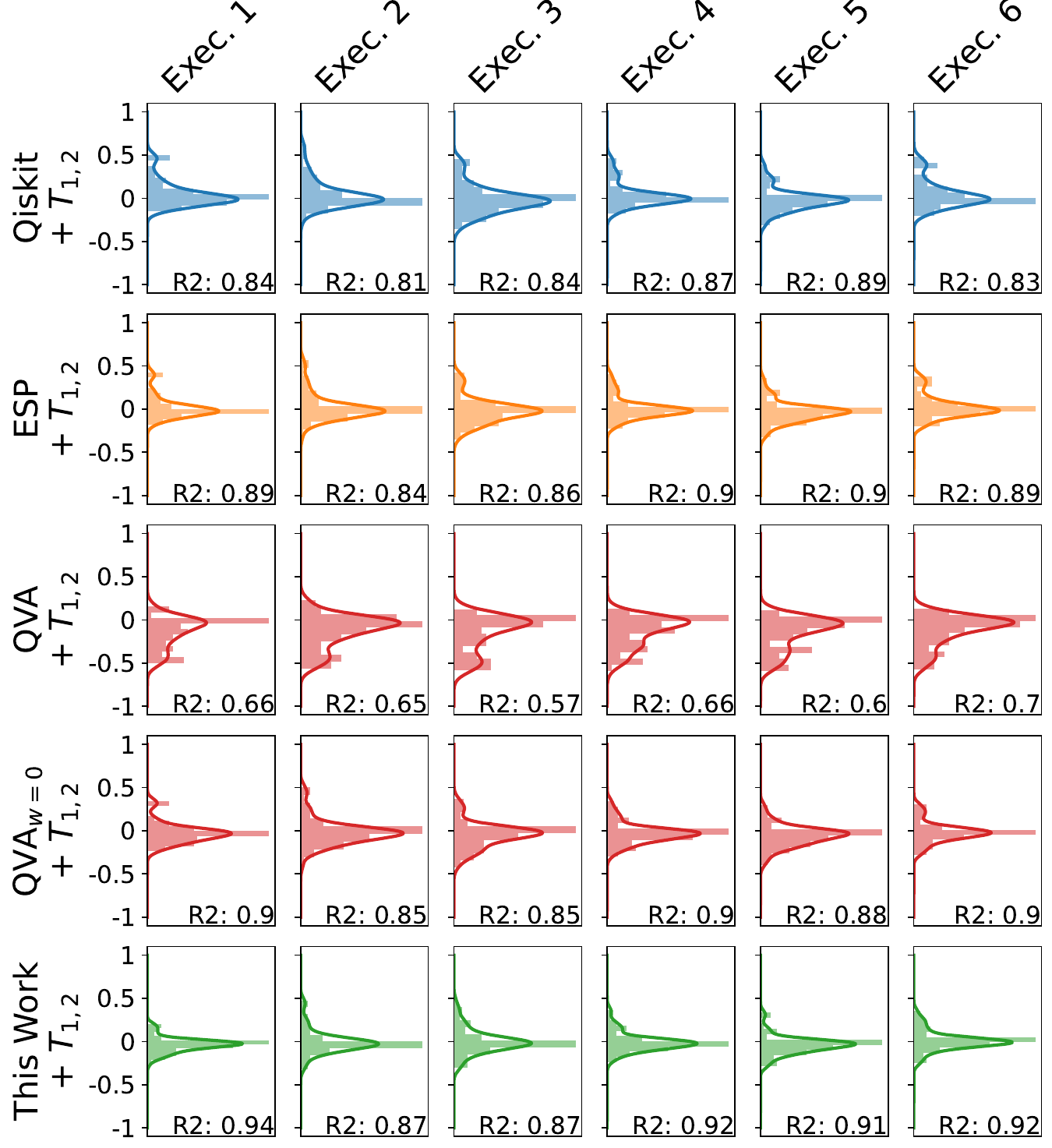}
    \caption{Comparison of the fidelity prediction accuracy for five different estimation models, each incorporating coherence errors ($T_1$ and $T_2$). The figure consists of five rows, one per model (including QVA with $w=0$), and six columns, each corresponding to an independent experiment conducted on different days using IBM Q Eagle processors (IBM Q Kyiv, IBM Q Sherbrooke, and IBM Q Brisbane). Each cell represents a specific experiment's fidelity error for each model. The $R^2$ metric is reported for each model, quantifying the correlation between predicted and measured fidelities. Across all experiments, the proposed model consistently achieves the highest $R^2$ values, ranging from $0.87$ to $0.94$, demonstrating superior predictive accuracy and robustness.}
    \label{fig:day_pdf}
\end{figure}

\section{Software Availability}
To ensure full reproducibility of our results, all code, quantum circuits (in QASM format), generated data, and analysis scripts used in this study are publicly available in our GitHub repository: \url{https://github.com/escofetpau/Analytic-Model-of-Fidelity-under-Depolarizing-Noise}.

This repository contains the source code for all fidelity estimation methods, including Qiskit Simulation, ESP, QVA, our proposed approach, and all circuits used in our experiments. Additionally, it provides six sets of experimental data collected from real hardware executions and the necessary scripts to regenerate all figures in the paper. By making our methodology accessible, we aim to facilitate independent validation and further research on fidelity estimation and quantum hardware performance analysis.

\section*{References}

\bibliographystyle{unsrt}
\bibliography{references}

\end{document}